\numberwithin{equation}{section}
\numberwithin{figure}{section}
\newcommand{\lyxaddress}[1]{
	\par {\raggedright #1
	\vspace{1.4em}
	\noindent\par}
}
\theoremstyle{plain}
\newtheorem{thm}{\protect\theoremname}
\theoremstyle{plain}
\newtheorem{lem}{\protect\lemmaname}
\theoremstyle{remark}
\newtheorem{claim}{\protect\claimname}
\theoremstyle{plain}
\newtheorem{cor}{\protect\corollaryname}
\let\originalleft\left
\let\originalright\right
\renewcommand{\left}{\mathopen{}\mathclose\bgroup\originalleft}
\renewcommand{\right}{\aftergroup\egroup\originalright}
\DeclareMathOperator{\E}{\mathbb{E}}
\DeclareMathOperator{\CIS}{\textsc{CIS}}
\DeclareMathOperator{\CC}{\textsc{C}}
\date{}
\providecommand{\claimname}{Claim}
\providecommand{\corollaryname}{Corollary}
\providecommand{\lemmaname}{Lemma}
\providecommand{\theoremname}{Theorem}
\begin{document}

\title{Several Separations Based on a Partial Boolean Function\thanks{Supported by the project ``Quantum algorithms: from complexity theory
to experiment'' funded under ERDF programme 1.1.1.5.}}

\author{Kaspars Balodis$^{1}$}
\maketitle

\lyxaddress{$^{1}$ Center for Quantum Computer Science, Faculty of Computing,
University of Latvia}
\begin{abstract}
We show a partial Boolean function $f$ together with an input $x\in f^{-1}\left(*\right)$
such that both $\CC_{\bar{0}}\left(f,x\right)$ and $\CC_{\bar{1}}\left(f,x\right)$
are at least $\CC\left(f\right)^{2-o\left(1\right)}$. Due to recent
results by Ben-David, G\"{o}\"{o}s, Jain, and Kothari, this result
implies several other separations in query and communication complexity.
For example, it gives a function $f$ with $\CC(f)=\Omega(\deg^{2-o\left(1\right)}(f))$
where $\CC$ and $\deg$ denote certificate complexity and polynomial
degree of $f$. (This is the first improvement over a separation between
$\CC(f)$ and $\deg(f)$ by Kushilevitz and Nisan in 1995.) Other
implications of this result are an improved separation between sensitivity
and polynomial degree, a near-optimal lower bound on conondeterministic
communication complexity for Clique vs. Independent Set problem and
a near-optimal lower bound on complexity of Alon--Saks--Seymour
problem in graph theory.
\end{abstract}

\section{The puzzle}

Recently Ben-David, G\"{o}\"{o}s, Jain, and Kothari published a paper
\cite{ben2021unambiguous} demonstrating that several separation problems
can be reformulated (disguised) as one of three equivalent puzzles,
hinting that such formulations may be more seductive for tricking
more people into trying to solve them. We report that they have indeed
succeeded and show an optimal solution to one of the puzzles.

We use the following formulation from \cite{ben2021unambiguous}.
Consider a partial Boolean function $f:\left\{ 0,1\right\} ^{n}\rightarrow\left\{ 0,1,*\right\} $
where some of the inputs are \emph{undefined}, $f\left(x\right)=*$.
Let $\Sigma\subseteq\left\{ 0,1,*\right\} $ be a subset of output
symbols. Denote by $0$, $1$, $\bar{0}$, $\bar{1}$ the output sets
$\left\{ 0\right\} $, $\left\{ 1\right\} $, $\left\{ 1,*\right\} $,
$\left\{ 0,*\right\} $. A partial input $\rho\in\left\{ 0,1,*\right\} ^{n}$
is a $\Sigma$-\emph{certificate} for $x\in\left\{ 0,1\right\} ^{n}$
if $\rho$ is consistent with $x$ (i.e., for each entry in $x$,
the corresponding entry of $\rho$ contains the same symbol or $*$)
and for every input $x'$ consistent with $\rho$ we have $f\left(x'\right)\in\Sigma$.
The \emph{size} of $\rho$, denoted $\left|\rho\right|$, is the number
of its non-$*$ entries. The $\Sigma$-\emph{certificate complexity}
of $x$, denoted $\CC_{\Sigma}\left(f,x\right)$, is the least size
of a $\Sigma$-certificate for $x$. The $\Sigma$-\emph{certificate
complexity} of $f$, denoted $\CC_{\Sigma}\left(f\right)$ is the
maximum of $\CC_{\Sigma}\left(f,x\right)$ over all $x\in f^{-1}\left(\Sigma\right)$.
Finally, we define the\emph{ certificate complexity} $\CC\left(f\right)$
as $\max\left\{ \CC_{0}\left(f\right),\CC_{1}\left(f\right)\right\} $.

\textbf{Puzzle}. For $\alpha>1$, does there exist a partial function
$f$ together with an $x\in f^{-1}\left(*\right)$ such that both
$\CC_{\bar{0}}\left(f,x\right)$ and $\CC_{\bar{1}}\left(f,x\right)$
are at least $\CC\left(f\right)^{\alpha-o\left(1\right)}$?

Abusing the terminology, instead of a single Boolean function actually
an infinite sequence of functions $f_{n}$ satisfying $\CC\left(f_{n}\right)\rightarrow\infty$
as $n\rightarrow\infty$ is meant. It is known that a solution with
$\alpha=2$ would be optimal. In \cite{ben2021unambiguous} a simple
function with $\alpha=1.5$ inspired by the board game Hex is constructed.
It is also conjectured that the puzzles are soluble with exponent
$2$. Indeed, this is the case and we demonstrate a function achieving
the optimal $\alpha=2$.

\section{The solution}

Our contribution is as follows.
\begin{thm}
\label{thm:function}There exists a monotone partial Boolean function
$f$ and an input $x\in f^{-1}\left(*\right)$ such that both $\CC_{\bar{0}}\left(f,x\right)$
and $\CC_{\bar{1}}\left(f,x\right)$ are at least $\CC\left(f\right)^{2-o\left(1\right)}$.
\end{thm}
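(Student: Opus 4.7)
The goal is to saturate the universal bound $\CC_{\bar 0}(f,x)\le\CC(f)^2$ up to $n^{o(1)}$ factors, so I want $f$ with $\CC(f)=N^{1+o(1)}$ and some $*$-input $x$ at which $\CC_{\bar 0}(f,x),\CC_{\bar 1}(f,x)=\Omega(N^2)$. The plan is to work with an explicit combinatorial $f$ on roughly $N^2$ variables, defined via two \emph{witness families}: a family $\mathcal R_1$ of ``$1$-witnesses'' (small partial assignments forcing $f=1$) and a family $\mathcal R_0$ of ``$0$-witnesses''; set $f(x)=1/0/*$ depending on which family $x$ covers, with $\mathcal R_0$ and $\mathcal R_1$ designed so that the $1$- and $0$-regions are disjoint (so $f$ is well defined) and monotone. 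By construction $\CC_0(f)\le\max_{S\in\mathcal R_0}|S|$ and $\CC_1(f)\le\max_{S\in\mathcal R_1}|S|$, so taking every witness of size $O(N)$ already gives $\CC(f)=O(N)$.

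A useful sanity check is the baseline $N\times N$ row/column template: $1$-witnesses are all-$1$ rows, $0$-witnesses are all-$0$ columns, $f=*$ otherwise. It is monotone with $\CC(f)\le N$, but at the obvious $*$-input $x_{ij}=\mathbbm{1}[i\ne j]$ both $\bar 0$- and $\bar 1$-certificates require only $N$ bits (one fixed $1$ per column, resp.\ one fixed $0$ per row). That yields only $\alpha=1$: blocking a threat is too cheap. The structural upgrade I would aim for is that at the chosen $*$-input $x$ there are $\Omega(N)$ pairwise almost-disjoint potential $0$-witnesses, each disagreeing with $x$ in $\Omega(N)$ entries. Then any $\bar 0$-certificate $\rho$ must fix bits inside $\Omega(N)$ of these witnesses, and since each has $\Omega(N)$ ``switchable'' bits at $x$ and the witnesses overlap negligibly, $|\rho|=\Omega(N^2)$; the symmetric argument handles $\bar 1$.

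Realising both halves simultaneously is the technical heart of the plan: property (a) every total $0/1$-input must contain at least one small witness (to keep $\CC(f)=O(N)$) pushes witnesses toward being small and overlapping; property (b) at $x$ there should be many large near-disjoint witness \emph{candidates} pulls in the opposite direction. I would look for the right trade-off in a bipartite-graph / design-theoretic construction -- for example taking the witnesses to be bicliques of a quasi-random bipartite graph, or rows and columns of a matrix built from a suitable code -- so that the designated $x$ exhibits $\Omega(N)$ disjointish $0$-threats and $\Omega(N)$ disjointish $1$-threats despite the overall tightness of the witness system. The main obstacle I anticipate is tightly lower-bounding $\CC_{\bar 0}(f,x)$: one needs to show that no single bit of $x$ obstructs more than polylogarithmically many of the $\Omega(N)$ threats, a hitting-set / entropy-style statement that forces $\Omega(N^2/\mathrm{polylog})$ bits in the certificate, with the polylog loss absorbed into the $n^{o(1)}$ slack of the exponent. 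I would first verify a candidate design on small cases to fix parameters, then formalise the hitting-set lower bound in full generality.
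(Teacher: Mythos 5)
Your submission is a research plan rather than a proof: the construction itself is left as ``a bipartite-graph / design-theoretic construction'' to be found, and you yourself flag the lower bound on $\CC_{\bar{0}}(f,x)$ as an unresolved obstacle. Beyond the incompleteness, the quantitative skeleton you propose does not add up. A $\bar{0}$-certificate need only fix \emph{one} bit inconsistently with a given $0$-witness in order to block it, so ``$\Omega(N)$ near-disjoint $0$-threats, each disagreeing with $x$ in $\Omega(N)$ entries'' yields only an $\Omega(N)$ lower bound; likewise your hitting-set step (``no single bit obstructs more than polylogarithmically many of the $\Omega(N)$ threats'') gives $\Omega(N/\mathrm{polylog})$, not $\Omega(N^{2}/\mathrm{polylog})$. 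To reach $N^{2}$ you need either $\Omega(N^{2})$ threats with pairwise disjoint blocking bits, or $\Omega(N)$ threats each of which provably costs $\Omega(N)$ bits to block; your plan supplies a mechanism for neither.

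The paper resolves exactly these two points with two ideas absent from your proposal. First, the $0$-inputs are defined \emph{tautologically}: $f(x)=0$ iff some certificate of size at most $(2\ell+2)n$ witnesses $f(x)\neq1$, so $\CC_{0}(f)=O(\ell n)$ holds by fiat and all the work moves to showing the designated input is genuinely a $*$-input. Second, the function lives on an $n\times n$ matrix of bit-pairs, where $f=1$ requires two ``matching'' rows $x_{i_{1}},x_{i_{2}}$ \emph{and} that $\ell$ randomly chosen ``associated'' rows $x_{r_{k}(i_{1},i_{2})}$ contain no $(0,0)$ entry. At the hard input $z$, each of the $\binom{n}{2}$ row pairs can be activated by its own private two bits, which gives the $\bar{1}$ bound directly via $\Theta(n^{2})$ disjoint threats. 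For the $\bar{0}$ bound, a Chernoff-plus-union-bound lemma over all row subsets $S$ shows that once all but $m\leq n^{(\ell+1)/(\ell+2)}$ rows are made bad (by writing $(0,0)$ into the unread entries of a single sparsely-read column), the random association already spoils all but $\ell n$ of the remaining $m^{2}$ pairs; hence any partial input $\rho$ of size $n^{2-1/(\ell+2)}$ extends to a $0$-input, and preventing this forces a $\bar{0}$-certificate to read $n^{1-o(1)}$ entries in essentially every column. This association gadget is precisely the quadratic amplification your plan needs but does not construct; without it, or a concrete substitute, the approach does not go through.
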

We denote by $\left[n\right]$ the set $\left\{ 1,2,\dots,n\right\} $.

Let $\left(r_{k}:\left[n\right]\times\left[n\right]\rightarrow\left[n\right]\right)_{k\in\left[\ell\right]}$
be a collection of $\ell$ independent random functions where the
output is chosen uniformly from $\left[n\right]$.
\begin{lem}
\label{lemma:rand}Let $\ell>4$. Consider an arbitrary $S\subseteq\left[n\right]$
with $\left|S\right|=m\leq n^{\frac{\ell+1}{\ell+2}}$. With probability
$1-o\left(1\right)$, 
\[
\left|\left\{ \left(i,j\right)\mid i,j\in S\wedge\forall k\in\left[\ell\right]\,r_{k}\left(i,j\right)\in S\right\} \right|\leq\ell\cdot n.
\]
\end{lem}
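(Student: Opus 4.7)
The plan is a straightforward first-moment calculation followed by a Chernoff-type concentration bound, hinging on a simple independence observation.

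Fix $S$ with $|S|=m$, and for each ordered pair $(i,j)\in S\times S$ let $X_{(i,j)}$ be the indicator of the event $\{r_k(i,j)\in S \text{ for all } k\in[\ell]\}$. Write $X$ for the sum of these indicators, so $X$ is exactly the cardinality we want to bound. Since the $r_k$ are independent and each is uniform on $[n]$, each indicator has mean $(m/n)^\ell$, hence $\E[X]=m^{\ell+2}/n^\ell$. The hypothesis $m\leq n^{(\ell+1)/(\ell+2)}$ is calibrated precisely so that $m^{\ell+2}\leq n^{\ell+1}$, and therefore $\E[X]\leq n$.

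The observation driving the argument is that the family $\{X_{(i,j)}\}_{(i,j)\in S\times S}$ is mutually independent: each $X_{(i,j)}$ is determined by the $\ell$ function values $r_1(i,j),\ldots,r_\ell(i,j)$, and distinct ordered pairs pick out disjoint cells of the random functions. Hence $X$ is a sum of independent $\{0,1\}$ random variables with mean at most $n$, and a multiplicative Chernoff bound applied at the threshold $\ell n$ (using $\ell>4>e$) gives $\Pr[X\geq \ell n]\leq (e/\ell)^{\ell n}=e^{-\Omega(n)}$, which is $o(1)$. (Chebyshev would actually already suffice, since $\mathrm{Var}(X)\leq\E[X]\leq n$ yields $\Pr[X\geq \ell n]=O(1/n)$, but Chernoff gives the sharper exponential tail essentially for free.)

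The only delicate step is the independence observation; without it, Markov's inequality alone would yield only a constant upper bound on the failure probability, which is insufficient for ``$1-o(1)$''. Once independence is noted, the rest of the proof is routine bookkeeping of the exponent, so I expect the written argument to fit in just a few lines.
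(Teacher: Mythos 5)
Your first-moment computation, the independence observation, and the Chernoff step all match the paper's argument (and your tail bound $(e/\ell)^{\ell n}$ is in fact sharper than the paper's $e^{-(\ell-3)n}$; the paper, like you, implicitly relies on the indicators for distinct ordered pairs being determined by disjoint cells of the $r_k$). However, there is a genuine gap in what you prove versus what the lemma must deliver. The statement is needed, and is proved in the paper, in the form ``with probability $1-o(1)$ over the choice of the $r_k$, the bound holds \emph{simultaneously for every} $S$ of size at most $n^{(\ell+1)/(\ell+2)}$.'' In the application, $S$ is the set of rows surviving in a column, and it depends on the adversary's partial input $\rho$, which is chosen \emph{after} the random functions are fixed; a per-$S$ guarantee is useless there. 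The paper therefore finishes with a union bound over all $2^n$ subsets, $\Pr[\exists S: Z^S\geq \ell n]\leq 2^n e^{-(\ell-3)n}=o(1)$, and this is precisely where the otherwise-unused hypothesis $\ell>4$ enters. You fix a single $S$ and stop.

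The omission is easy to repair with your estimate, since $2^n\cdot(e/\ell)^{\ell n}=o(1)$ for $\ell>4$ (indeed for any $\ell$ with $\ell(\ln\ell-1)>\ln 2$). But it is not merely a dropped routine line: your parenthetical claim that Chebyshev ``would already suffice'' shows you are reading the quantifiers the wrong way around. A failure probability of $O(1/n)$ per set cannot be union-bounded over exponentially many sets, so Chebyshev does \emph{not} suffice for the statement the construction actually uses; the exponential tail is essential, not a free bonus. You should restate the conclusion as $\Pr[\forall S:\; Z^S\leq \ell n]=1-o(1)$ and add the union bound explicitly.
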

\begin{proof}
First, let us consider a random $S\subseteq\left[n\right]$. Define
a random variable
\[
Z_{i,j}^{S}=\begin{cases}
1, & \text{if \ensuremath{\forall k\in\left[\ell\right]\,r_{k}\left(i,j\right)\in S}}\\
0, & \text{otherwise}
\end{cases}.
\]

$\E\left[Z_{i,j}^{S}\right]=\Pr\left[Z_{i,j}^{S}=1\right]=\left(\frac{m}{n}\right)^{\ell}$.

Let $Z^{S}=\sum_{i,j\in S}Z_{i,j}^{S}.$

\begin{align*}
\E\left[Z^{S}\right] & =\sum_{i,j\in S}\E\left[Z_{i,j}^{S}\right]\\
 & =m^{2}\cdot\left(\frac{m}{n}\right)^{\ell}\\
 & \leq\left(n^{\frac{\ell+1}{\ell+2}}\right)^{2}\cdot\left(\frac{n^{\frac{\ell+1}{\ell+2}}}{n}\right)^{\ell}\\
 & =n^{\frac{2\ell+2}{\ell+2}}\cdot n^{\frac{-\ell}{\ell+2}}\\
 & =n.
\end{align*}

Therefore, $\mu_{u}=n$ is an upper bound on $\E\left[Z^{S}\right]$.

By the Chernoff inequality (see e.g. \cite{chung2006concentration})
$\Pr\left[Z^{S}\geq\left(1+\varepsilon\right)\mu_{u}\right]\leq\exp\left(-\frac{\varepsilon^{2}}{2+\varepsilon}\mu_{u}\right)$.

Therefore, 
\begin{align*}
\Pr\left[Z^{S}\geq\ell\cdot n\right] & =\Pr\left[Z^{S}\geq\left(1+\left(\ell-1\right)\right)\mu_{u}\right]\\
 & \leq\exp\left(-\frac{\left(\ell-1\right)^{2}}{2+\ell-1}n\right)\\
 & =\exp\left(-\frac{\left(\ell-1\right)^{2}-4+4}{\ell+1}n\right)\\
 & \leq\exp\left(-\frac{\left(\ell-1\right)^{2}-4}{\ell+1}n\right)\\
 & =\exp\left(-\frac{\left(\ell-1-2\right)\left(\ell-1+2\right)}{\ell+1}n\right)\\
 & =\exp\left(-\left(\ell-3\right)n\right).
\end{align*}

Now, let us calculate the probability that there exists such $S$
that violates the inequality in the Lemma statement.

\[
\Pr\left[\exists S:Z^{S}\geq\ell\cdot n\right]\leq\frac{2^{n}}{e^{\left(\ell-3\right)n}}\leq\frac{e^{n}}{e^{\left(\ell-3\right)n}}=\frac{1}{e^{\left(\ell-4\right)n}}=o\left(1\right).
\]
\end{proof}
Consider an input consisting of $2n^{2}$ variables $x_{i,j,b}\in\left\{ 0,1\right\} $
with $i,j\in\left[n\right]$ and $b\in\left[2\right]$. The input
is interpreted as an $n\times n$ matrix containing pairs of Boolean
values as entries. The variable pair $\left(x_{i,j,1},x_{i,j,2}\right)$
is the entry in the $i$-th row and $j$-th column. We refer to the
$i$-th row by $x_{i}$.

We call two entries $\left(a_{1},a_{2}\right)$ and $\left(b_{1},b_{2}\right)$
\emph{matching }if $\left(a_{1}\wedge b_{1}\right)\vee\left(a_{2}\wedge b_{2}\right)$.
We call two distinct rows $x_{i_{1}}$ and $x_{i_{2}}$ \emph{matching
}if in each column they have matching entries.

We call a row \emph{bad }if it contains an entry $\left(0,0\right)$.

For every pair $\left(i_{1},i_{2}\right)$, we call the rows $x_{r_{1}\left(i_{1},i_{2}\right)},\dots,x_{r_{\ell\left(i_{1},i_{2}\right)}}$
\emph{associated }with the rows $x_{i_{1}},x_{i_{2}}$.

Define $f\left(x\right)=1$ if there exist two matching rows $x_{i_{1}},x_{i_{2}}$,
and none of the associated rows $x_{r_{1}\left(i_{1},i_{2}\right)},\dots,x_{r_{\ell}\left(i_{1},i_{2}\right)}$
are bad.

Notice that a bad row dismisses its chance to be matching with any
other row, as well as spoils every pair for which it is associated.

Define $f\left(x\right)=0$ if there exists a certificate on at most
$\left(2\ell+2\right)n$ variables which certifies that $f\left(x\right)\neq1$.

Otherwise, define $f\left(x\right)=*$.

More formally,

\[
f\left(x\right)=\begin{cases}
1, & \exists i_{1},i_{2}\in\left[n\right]:\,\left(i_{1}\neq i_{2}\right)\\
 & \wedge\left(\forall j\in\left[n\right]\,\exists b\in\left[2\right]\,\left(x_{i_{1},j,b}\wedge x_{i_{2},j,b}\right)\right)\\
 & \wedge\left(\forall k\in\left[\ell\right]\,\forall j\in\left[n\right]\,\exists b\in\left[2\right]\,x_{r_{k}\left(i_{1},i_{2}\right),j,b}\right)\\
0, & \text{a certificate with \ensuremath{\leq\left(2\ell+2\right)n} variables exists certifying that }f\left(x\right)\neq1\\
*, & \text{otherwise}
\end{cases}.
\]

$\CC_{1}\left(f\right)\leq2n\cdot\left(\ell+2\right)$ because the
two matching rows $x_{i_{1}},x_{i_{2}}$ together with the associated
rows $x_{r_{1}\left(i_{1},i_{2}\right)},\dots,x_{r_{\ell}\left(i_{1},i_{2}\right)}$
certify that $f\left(x\right)=1$.

$\CC_{0}\left(f\right)\leq\left(2\ell+2\right)n$ by definition.

Consider the input $z$ in which $z_{i,i,1}=1,z_{i,i,2}=0$, and $z_{i,j,1}=0,z_{i,j,2}=1$
for $i\neq j$. I.e., the diagonal entries are $\left(1,0\right)$,
and all other entries are $\left(0,1\right)$. Clearly, $f\left(z\right)\neq1$
as every pair of rows are non-matching, due to the diagonal entries.

$\CC_{\bar{1}}\left(f,z\right)\geq\frac{n\left(n-1\right)}{2}$ because
there are no bad rows and every pair of rows $z_{i_{1}},z_{i_{2}}$
could be made matching by setting $z_{i_{1},i_{2},1}=z_{i_{2},i_{1},1}=1$,
therefore any certificate certifying that $f\left(z\right)\neq1$
should contain at least one of $z_{i_{1},i_{2},1}$ and $z_{i_{2},i_{1},1}$
for every $i_{1},i_{2}\in\left[n\right]\left(i_{1}\neq i_{2}\right)$.
\begin{claim}
$\CC_{\bar{0}}\left(f,z\right)>n^{\frac{2\ell+3}{\ell+2}}$.
\end{claim}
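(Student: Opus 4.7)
My plan is to assume for contradiction that $\rho$ is a $\bar{0}$-certificate for $z$ with $\left|\rho\right|\leq n^{(2\ell+3)/(\ell+2)}$ and exhibit an extension $x$ of $\rho$ with $f(x)=0$. The first observation I would use is that $\rho$ must be consistent with $z$, so every variable $\rho$ fixes to $1$ is either a diagonal bit-$1$ variable $x_{i,i,1}$ or an off-diagonal bit-$2$ variable $x_{i,j,2}$ (with $j\neq i$). I will call a row $i$ \emph{hot} when $\rho$ fixes to $1$ all $n$ of these candidate variables in row $i$, and let $S$ denote the hot rows. Each hot row contributes at least $n$ fixed-to-$1$ entries to $\rho$, so $\left|S\right|\leq\left|\rho\right|/n\leq n^{(\ell+1)/(\ell+2)}$, and Lemma~\ref{lemma:rand} then guarantees that
\[
T=\{(i_{1},i_{2}):i_{1},i_{2}\in S\wedge\forall k\in[\ell]\,r_{k}(i_{1},i_{2})\in S\}
\]
has $\left|T\right|\leq\ell n$.

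Next I would build $x$ by picking, for each row $i\notin S$, some column $j(i)$ in which neither of the two variables is fixed to $1$ by $\rho$ (such a column exists because $i$ is not hot), setting $x_{i,j(i),1}=x_{i,j(i),2}=0$, and setting all remaining unfixed variables to $0$. The key observation is that a hot row already has all its ``$1$-candidate'' entries pinned by $\rho$, so in $x$ every row of $S$ looks exactly like the corresponding row of $z$: diagonal $(1,0)$ and off-diagonals $(0,1)$. Consequently rows outside $S$ are bad (hence non-matching with every other row) and any two rows inside $S$ fail to match at the diagonal column of either one, exactly as happens in $z$ itself. Therefore $f(x)\neq1$.

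It remains to produce a $\bar{1}$-certificate $\rho'$ of $x$ of size at most $(2\ell+2)n$, which by the definition of $f$ forces $f(x)=0$ and contradicts $\rho$. I take $\rho'$ to consist of two groups: for each $i\notin S$ the two bits $x_{i,j(i),1}=x_{i,j(i),2}=0$ (at most $2n$ bits, forcing row $i$ to be bad in every extension of $\rho'$); and for each $(i_{1},i_{2})\in T$ the two bits $x_{i_{2},i_{1},1}=0$ and $x_{i_{1},i_{1},2}=0$ (at most $2\ell n$ bits), which make column $i_{1}$ non-matching between rows $i_{1}$ and $i_{2}$ in every extension. The total is $(2\ell+2)n$, and in any extension $y$ of $\rho'$ a matching pair must live inside $S$: pairs outside $T$ are killed because some associated row is outside $S$ and therefore bad, and pairs in $T$ are killed because $\rho'$ already forbids a match at column $i_{1}$. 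The step I expect to be the main obstacle is this final $T$-piece, since rows in $S$ cannot in general be made bad; it works only because $\rho$'s consistency with $z$ guarantees that the values $x_{i_{2},i_{1},1}=0$ and $x_{i_{1},i_{1},2}=0$ are available, letting the diagonal column of $i_{1}$ serve as a ``free'' non-matching witness and keeping the certificate within the $(2\ell+2)n$ budget.
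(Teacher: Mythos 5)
Your proof is correct and follows essentially the same strategy as the paper's: identify a set $S$ of at most $n^{\frac{\ell+1}{\ell+2}}$ rows that cannot cheaply be made bad, make every other row bad with at most $2n$ exposed zeros, invoke Lemma~\ref{lemma:rand} to bound the surviving pairs within $S$ by $\ell n$, and spoil each of those with two further zeros taken from the diagonal column. The only (harmless) difference is how $S$ is selected: the paper averages over columns and takes the rows whose entry in one fixed column was read by $\rho$, whereas you take the ``hot'' rows all of whose $1$-valued variables are read, which yields the same bound $\left|S\right|\leq\left|\rho\right|/n\leq n^{\frac{\ell+1}{\ell+2}}$.
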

\begin{proof}
Let $\rho$ be a partial input consistent with $z$ that has size
$n^{\frac{2\ell+3}{\ell+2}}$. We will construct a $0$-certificate
$\sigma$ consistent with $\rho$ with size $\leq\left(2\ell+2\right)n$.
By averaging, there exists a column in which $\rho$ has read (one
or both variables) from $m\leq n^{\frac{\ell+1}{\ell+2}}$ entries.
In this column in the unseen entries we write $\left(0,0\right)$
in $\sigma$, therefore making the corresponding rows bad, hence unfit
for being matching. Here we have used no more than $2n$ variables
in $\sigma$.

Now we have to spoil the remaining $m\left(m-1\right)$ possible pairs.
Notice that, by association, most of these pairs are already spoiled.
By Lemma \ref{lemma:rand} there are at most $\ell\cdot n$ possibly
matching unspoiled pairs of rows (the rest are spoiled by having at
least one bad row associated with them). We spoil each of them by
exposing in $\sigma$ two zeros that make these rows non-matching.
With this, the $0$-certificate $\sigma$ is complete -- every pair
of rows are shown to be non-matching or having an associated bad row.
In this step we have used at most $2\ell n$ variables in $\sigma$,
and at most $\left(2\ell+2\right)n$ in total.

Therefore, we have constructed a $0$-certificate that is consistent
with $\rho$ and has size $\leq\left(2\ell+2\right)n$. Therefore,
$\rho$ cannot be a $\bar{0}$-certificate, i.e., any $\bar{0}$-certificate
must have size $>n^{\frac{2\ell+3}{\ell+2}}$.
\end{proof}
By setting $\ell=\log n$, we have 

\begin{align*}
\CC_{1}\left(f\right) & =\tilde{\Omega}\left(n\right)\\
\CC_{0}\left(f\right) & =\tilde{\Omega}\left(n\right)\\
\CC_{\bar{1}}\left(f,z\right) & =n^{2-o\left(1\right)}\\
\CC_{\bar{0}}\left(f,z\right) & =n^{2-o\left(1\right)}
\end{align*}

and Theorem \ref{thm:function} follows.

Notice that $f$ is monotone, i.e., flipping any bit in an input $z$
from $0$ to $1$ can only change $f\left(z\right)$ from $0$ to
$*$ or $1$, or from $*$ to 1. This is no coincidence, because,
in fact, $f$ was derived from a somewhat more complex function by
a transformation inspired by \cite[Remark 15]{ben2021unambiguous}
which transforms a function into a monotone one.

\section{The implications}

In this section we list the main bounds and separations arising from
our result. All of them are noted in \cite{ben2021unambiguous}. As
they cover a wide range of concepts and contain no new contributions
from our side, we restrict ourselves to only listing them and indeed
even do not define all the terminology used for stating them, but
refer the reader to \cite{ben2021unambiguous} and other mentioned
sources instead.

The following two corollaries follow from the other two formulations
of equivalent puzzles in \cite{ben2021unambiguous}.
\begin{cor}
There exists a Boolean function f with $\CC_{0}\left(f\right)\geq\textsc{UC}_{1}\left(f\right)^{2-o\left(1\right)}$.
\end{cor}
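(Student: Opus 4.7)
The plan is to invoke the equivalence of the three puzzles established in~\cite{ben2021unambiguous}. One of those equivalent formulations concerns producing a total Boolean function $g$ with $\CC_{0}(g)\geq \textsc{UC}_{1}(g)^{\alpha-o(1)}$, where $\textsc{UC}_{1}$ denotes unambiguous $1$-certificate complexity. Since \cite{ben2021unambiguous} gives an explicit, complexity-preserving reduction from solutions of the partial-function puzzle to solutions of this total-function puzzle, it suffices to plug in the function constructed in Theorem~\ref{thm:function}.

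Concretely, first I would recall the reduction of~\cite{ben2021unambiguous}: starting from a partial $f$ and an undefined input $x\in f^{-1}(*)$ with $\CC_{\bar{0}}(f,x),\CC_{\bar{1}}(f,x)\geq \CC(f)^{\alpha-o(1)}$, one builds a total Boolean function $g$ so that, on the one hand, the $1$-inputs of $g$ are covered by a disjoint family of $1$-certificates, each of size $O(\CC_{1}(f))$, giving $\textsc{UC}_{1}(g)=O(\CC(f))$; on the other hand, any $0$-certificate of $g$ on the input that encodes $x$ restricts to a $\bar{0}$-certificate for $(f,x)$, so that $\CC_{0}(g)\geq \CC_{\bar{0}}(f,x)$.

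Then I would instantiate with the function $f$ and input $z$ from Theorem~\ref{thm:function}, for which $\CC(f)=\tilde{O}(n)$ while $\CC_{\bar{0}}(f,z)\geq n^{2-o(1)}$. The resulting total function $g$ satisfies $\textsc{UC}_{1}(g)=\tilde{O}(n)$ and $\CC_{0}(g)\geq n^{2-o(1)}$, which rearranges to $\CC_{0}(g)\geq \textsc{UC}_{1}(g)^{2-o(1)}$ as required.

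The main obstacle is simply to reproduce the reduction of~\cite{ben2021unambiguous} carefully and verify that it preserves the exponent up to $o(1)$ rather than incurring a fixed polynomial loss; however, since~\cite{ben2021unambiguous} explicitly asserts that these puzzles are equivalent, this step amounts to faithfully citing their construction rather than redoing it from scratch.
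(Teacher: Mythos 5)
Your proposal is correct and matches the paper exactly: the paper offers no independent argument for this corollary, simply noting that it follows from the equivalence of the three puzzle formulations in \cite{ben2021unambiguous} applied to the function of Theorem~\ref{thm:function}. Your additional sketch of how the reduction preserves the relevant quantities is consistent with that equivalence and adds nothing the paper contradicts.
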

\begin{cor}
There exists an intersecting hypergraph $G=\left(V,E\right)$ together
with a colouring $c:V\rightarrow\left\{ 0,1\right\} $ such that every
$c$-monochromatic hitting set has size at least $r\left(G\right)^{2-o\left(1\right)}$.
\end{cor}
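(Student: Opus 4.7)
The plan is to invoke the equivalence of puzzles established in \cite{ben2021unambiguous}: the existence of a partial Boolean function $f$ and input $x\in f^{-1}(*)$ with $\CC_{\bar{0}}(f,x),\CC_{\bar{1}}(f,x)\geq\CC(f)^{\alpha-o(1)}$ is equivalent to the existence of an intersecting hypergraph $G$ with a colouring $c$ for which every $c$-monochromatic hitting set has size at least $r(G)^{\alpha-o(1)}$. Treating this equivalence as a black box, the corollary follows immediately from Theorem~\ref{thm:function} with $\alpha=2$.

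To make the black box concrete I would first consult \cite{ben2021unambiguous} and restate the precise translation. At a high level, the vertices of $G$ correspond to variables of $f$ (or to literals disagreeing with $x$), the edges correspond to minimal $0$- and $1$-certificates of $f$, and the colouring tags each vertex according to whether hitting it from the ``$x$-side'' would spoil a $0$-certificate or a $1$-certificate; the intersecting property follows from the fact that any $0$-certificate and any $1$-certificate must disagree on some variable, yielding a common vertex. Under this translation the rank $r(G)$ matches $\CC(f)$ up to constants, and a $c$-monochromatic hitting set of a given colour corresponds precisely to a $\bar{0}$- or $\bar{1}$-certificate for $x$.

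Next I would instantiate the translation on the function $f$ and input $z$ produced in the proof of Theorem~\ref{thm:function}. Since $\CC(f)=\tilde{O}(n)$ while $\CC_{\bar{0}}(f,z),\CC_{\bar{1}}(f,z)=n^{2-o(1)}$, the resulting intersecting hypergraph $G$ has rank $r(G)=\tilde{O}(n)$, and every $c$-monochromatic hitting set has size at least $n^{2-o(1)}=r(G)^{2-o(1)}$, which is exactly the bound claimed.

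The main obstacle is not really on our side: the nontrivial work, namely verifying that the translation preserves the parameters $\CC$, $\CC_{\bar{0}}$, $\CC_{\bar{1}}$, $r(G)$ and the size of monochromatic hitting sets up to subpolynomial factors, is already carried out in \cite{ben2021unambiguous}. Thus the proof reduces to quoting their equivalence theorem and substituting Theorem~\ref{thm:function}.
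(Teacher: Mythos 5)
Your proposal is correct and matches the paper's treatment exactly: the paper also derives this corollary solely by citing the equivalence of the three puzzle formulations in \cite{ben2021unambiguous} and plugging in Theorem~\ref{thm:function} with exponent $2$, offering no further detail. Your sketch of the underlying translation is extra exposition beyond what the paper provides, but the core argument is the same black-box reduction.
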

The next corollary follows from \cite{goos2015lower} and gives a
near-optimal lower bound for the complexity of the \emph{Clique vs.
Independent Set} problem by Yannakakis \cite{yannakakis1991expressing}.
\begin{cor}
There exists a graph $G$ such that the $\CIS_{G}$ requires $\Omega\left(\log^{2-o\left(1\right)}n\right)$
bits of conondeterministic communication.
\end{cor}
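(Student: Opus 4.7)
The corollary is obtained by composing Theorem \ref{thm:function} with the reduction from \cite{goos2015lower} that turns a puzzle-type separation into a conondeterministic communication lower bound for the Clique-vs-Independent-Set problem. The plan is to apply these two ingredients in sequence and then verify that the polylogarithmic overheads can be absorbed into the $o(1)$ in the exponent.

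First, I would use the puzzle equivalences of \cite{ben2021unambiguous} to convert the output of Theorem \ref{thm:function} into the hypergraph formulation described in the preceding corollary: an intersecting hypergraph $H$ with a $2$-colouring $c$ whose minimum $c$-monochromatic hitting set has size $r(H)^{2-o(1)}$. Second, I would feed $H$ into the construction of G\"{o}\"{o}s, Pitassi, and Watson, which builds from $H$ a graph $G$ such that the conondeterministic communication complexity of $\CIS_G$ is essentially the minimum $c$-monochromatic hitting set size of $H$, while $\log |V(G)|$ equals $r(H)$ up to polylogarithmic factors.

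Substituting the quantitative bound then gives a conondeterministic lower bound of $\Omega(r(H)^{2-o(1)})=\Omega(\log^{2-o(1)} n)$ for $n=|V(G)|$. The only real obstacle is parameter bookkeeping: both reductions introduce at most $\mathrm{polylog}(n)$ overheads, but these contribute only $\log^{o(1)} n$ factors to the final bound and so are absorbed into the $o(1)$ in the exponent, yielding the stated $\Omega(\log^{2-o(1)} n)$ bound.
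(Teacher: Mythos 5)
Your proposal matches the paper's intended derivation: the paper gives no proof of this corollary beyond stating that it ``follows from \cite{goos2015lower}'', and your sketch---pass the separation of Theorem \ref{thm:function} through the equivalent puzzle formulations of \cite{ben2021unambiguous} and then through the reduction of \cite{goos2015lower} to $\CIS_G$, absorbing polylogarithmic overheads into the $o(1)$---is exactly that route. (Minor note: \cite{goos2015lower} is G\"{o}\"{o}s's single-author paper, not G\"{o}\"{o}s--Pitassi--Watson.)
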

Equivalently (see, e.g., \cite{bousquet2014clique}), the same gap
applies to the graph-theoretic Alon--Saks--Seymour problem.
\begin{cor}
There exists a graph G such that $\chi\left(G\right)\geq\exp\left(\Omega\left(\log^{2-o\left(1\right)}\textrm{bp}\left(H\right)\right)\right)$.
\end{cor}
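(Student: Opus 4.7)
The plan is to derive this corollary as a direct graph-theoretic reformulation of the preceding $\CIS$ communication lower bound, via the equivalence of Bousquet, Lagoutte and Thomass\'e \cite{bousquet2014clique} between Clique vs.\ Independent Set and the Alon--Saks--Seymour conjecture on the gap between chromatic number and biclique partition number.

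First I would recall the standard dictionary from \cite{bousquet2014clique}. Starting from a graph $G$ on $n$ vertices, one constructs an auxiliary graph $H$ (whose vertices encode the maximal cliques and independent sets of $G$) so that $\log \mathrm{bp}(H)$ matches, up to constant factors, the nondeterministic communication complexity of $\CIS_G$, while $\log \chi(G)$ is bounded from below by the conondeterministic communication complexity of $\CIS_G$. Thus any quantitative separation between the two complexities of $\CIS_G$ immediately produces a quantitative gap between $\chi(G)$ and $\mathrm{bp}(H)$.

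Next, I would instantiate this dictionary on the graph $G$ produced by the previous corollary. The nondeterministic complexity of $\CIS_G$ is at most $O(\log n)$ (coming from the $\tilde O(n)$ certificate bounds supplied by Theorem~\ref{thm:function}), so $\log \mathrm{bp}(H) = O(\log n)$; on the other hand the conondeterministic complexity is $\Omega(\log^{2-o(1)} n)$, forcing $\log \chi(G) \ge \Omega(\log^{2-o(1)} n)$. Combining the two, $\log n = \Theta(\log \mathrm{bp}(H))$, and hence $\chi(G) \ge \exp(\Omega(\log^{2-o(1)} \mathrm{bp}(H)))$, as claimed.

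The main obstacle lies entirely in checking that the reduction of \cite{bousquet2014clique} preserves the $2 - o(1)$ exponent without any polylogarithmic slack. Since the correspondence between $\CIS$ complexities and the two graph parameters is tight in both directions, the exponent carries over without degradation; no new combinatorial work beyond Theorem~\ref{thm:function} and the preceding $\CIS$ lower bound is required, and this final corollary is essentially a restatement of those results in the language of graph coloring.
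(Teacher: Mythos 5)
Your proposal matches the paper's treatment: the paper proves this corollary only by citing the known equivalence (via \cite{bousquet2014clique}) between conondeterministic $\CIS$ lower bounds and chromatic-number-versus-biclique-partition gaps, which is exactly the dictionary you invoke and instantiate on the preceding corollary's graph. The only caveat is that the precise bookkeeping of which graph carries the $\chi$ bound and which the $\mathrm{bp}$ bound is stated loosely, but the paper's own statement has the same looseness, so this is consistent with the intended argument.
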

The next two separations follow from the cheat sheet constructions
\cite{ben2016low,aaronson2016separations}. They improve the power-$2.5$
separation due to \cite{ben2021unambiguous} and the power-$1.63$
separation due to Nisan, Kushilevitz, and Widgerson \cite{nisan1995rank},
respectively.
\begin{cor}
There exists a Boolean function $f$ with $\CC\left(f\right)\geq\Omega\left(s\left(f\right)^{3-o\left(1\right)}\right)$.
\end{cor}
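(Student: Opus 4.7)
The plan is to feed the partial Boolean function produced by Theorem \ref{thm:function} into the cheat-sheet construction of Aaronson, Ben-David, and Kothari \cite{aaronson2016separations}, as sharpened by Ben-David \cite{ben2016low}. This is essentially the same pipeline used in \cite{ben2021unambiguous} to promote its power-$1.5$ puzzle solution into the power-$2.5$ separation cited above; replacing the inner $1.5$ by the now-optimal $2-o(1)$ exponent should mechanically push the outer exponent from $1.63$ (Nisan--Kushilevitz--Wigderson \cite{nisan1995rank}) all the way to $3-o(1)$.

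First, I would take the partial function $f$ and the starred input $x\in f^{-1}(*)$ supplied by Theorem \ref{thm:function}, which satisfy $\CC_{\bar 0}(f,x),\CC_{\bar 1}(f,x)\geq\CC(f)^{2-o(1)}$, and plug them into the cheat-sheet transformation as the inner function; the output is a total Boolean function $F$. Second, I would invoke the standard cheat-sheet accounting: the sensitivity of $F$ stays within a polylog factor of the plain certificate complexity of the inner $f$ (because a single bit flip can only perturb one copy's claimed answer or one small piece of one cheat cell), whereas the certificate complexity of $F$ is forced to absorb the much larger $\bar 0$-certificate cost of $f$ at the distinguished input $x$. Third, combining these two bounds with the exponent-$2$ inequality gives $\CC(F)\geq s(F)^{3-o(1)}$, matching the fact that the cheat sheet contributes one extra unit to the exponent for separations of this flavour.

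The main obstacle I expect is purely bookkeeping rather than conceptual: one must chase the polylogarithmic slack introduced by the cheat sheet through each of $\CC$, $s$, and (for safety) block sensitivity, and confirm that the concrete encoding coming out of Theorem \ref{thm:function} fits the input format demanded by the cheat-sheet lemma without an additional polynomial blow-up. Since every ingredient beyond Theorem \ref{thm:function} is already packaged in \cite{aaronson2016separations,ben2016low}, no new combinatorial gadget is needed on our side and the remaining work is a compatibility check.
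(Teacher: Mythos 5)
Your overall route is the one the paper intends: the paper offers no argument for this corollary beyond citing the cheat-sheet machinery of \cite{aaronson2016separations,ben2016low} and noting that the inner exponent improves from $1.5$ to $2-o(1)$, so ``feed the function of Theorem \ref{thm:function} into the cheat-sheet construction and track parameters'' is exactly the intended plan, and the target exponent $3-o(1)$ fits the pattern (inner exponent $\alpha$ yields $\alpha+1$ against sensitivity, which is how \cite{ben2021unambiguous} obtained $2.5$ from $1.5$; the $1.63$ of \cite{nisan1995rank} that you also mention is the benchmark for the $\deg$ corollary, not this one).

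However, the explicit accounting you give does not deliver the stated conclusion. If $s(F)$ were within polylogarithmic factors of $\CC(f)$ and $\CC(F)$ were within polylogarithmic factors of $\CC_{\bar{0}}(f,x)\geq\CC(f)^{2-o(1)}$, you would obtain only $\CC(F)\geq s(F)^{2-o(1)}$; the ``extra unit in the exponent'' is asserted in your last sentence but is not implied by the two bounds you state. That extra unit is not free: it comes from the fact that the sensitivity (more precisely, the fractional block sensitivity) of a cheat-sheet function is upper-bounded by roughly the \emph{square root} of a product of complexity measures of the inner function --- this is the specific lemma of \cite{aaronson2016separations}, refined in \cite{ben2016low}, whose entire purpose is to produce \emph{low-sensitivity} functions from unambiguous certificates. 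It is this square-root saving on the sensitivity side, balanced against the certificate-complexity blow-up coming from $\CC_{\bar{0}}(f,z)$ and $\CC_{\bar{1}}(f,z)$, that converts the inner $2-o(1)$ into an outer $3-o(1)$. So the missing step in your sketch is to quote, and verify the hypotheses of, that sensitivity upper bound for the cheat-sheet function, in place of the naive (and too weak) claim that $s(F)$ tracks $\CC(f)$ up to polylogs.
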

\begin{cor}
There exists a Boolean function $f$ with $\CC\left(f\right)\geq\Omega\left(\deg\left(f\right)^{2-o\left(1\right)}\right)$.
\end{cor}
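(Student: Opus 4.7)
The plan is to feed the partial function $f$ from Theorem~\ref{thm:function} into the cheat sheet framework of \cite{aaronson2016separations,ben2016low} so as to extract a \emph{total} Boolean function $F$ whose parameters realise the claimed separation. Recall that the cheat sheet $F = f_{CS}$ of a (possibly partial) function $f$ takes as input $T = \Theta(\log n)$ independent copies of $f$'s input together with a long block of ``cheat sheets'' indexed by the claimed answer string of the $T$ copies; $F$ evaluates to $1$ precisely when the addressed cheat sheet contains valid certificates corroborating the claimed answers.

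From this construction I would invoke two facts. First, $\deg(F) = \tilde{O}(\CC(f))$, because an exact real polynomial of this degree suffices to read the claimed answers, address the right cheat sheet, and verify that the stored certificates genuinely certify each of the $T$ copies of $f$. Second, $\CC(F) = \tilde{\Omega}(\min(\CC_{\bar{0}}(f,z),\CC_{\bar{1}}(f,z)))$ when evaluated at an input built around the star point $z$ of Theorem~\ref{thm:function}: any $0$- or $1$-certificate for $F$ at such an input must, in at least one copy, rule out one of the two possible outputs of $f$ at $z$, and by Theorem~\ref{thm:function} doing so forces reading at least $\CC_{\bar{0}}(f,z)$ or $\CC_{\bar{1}}(f,z)$ variables of that copy.

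Plugging in the values from Theorem~\ref{thm:function}, namely $\CC(f) = \tilde{O}(n)$ and $\min(\CC_{\bar{0}}(f,z),\CC_{\bar{1}}(f,z)) = n^{2-o(1)}$, yields $\deg(F) = \tilde{O}(n)$ and $\CC(F) = n^{2-o(1)}$, which together give $\CC(F) \geq \Omega(\deg(F)^{2-o(1)})$ as required. The main obstacle is not the arithmetic but rather quoting the cheat sheet machinery in the precise form needed here: the original theorems of \cite{aaronson2016separations,ben2016low} were phrased for totalising a $Q$-vs-$\CC$ gap for total functions, and one needs the variant observation from \cite{ben2021unambiguous} that a partial-function $(\CC_{\bar{0}},\CC_{\bar{1}})$-vs-$\CC$ gap translates, via the same cheat-sheet gadget, into a total-function $\CC$-vs-$\deg$ gap with the same exponent up to polylogarithmic losses.
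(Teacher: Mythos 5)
Your proposal follows essentially the same route as the paper, which itself gives no proof of this corollary but simply invokes the cheat sheet constructions of \cite{ben2016low,aaronson2016separations} (with details deferred to \cite{ben2021unambiguous}): feed the partial function and its star input $z$ from Theorem~\ref{thm:function} into the cheat sheet gadget, upper-bound $\deg(F)$ by $\tilde{O}(\CC(f))$ via the unambiguous sum over cells, and lower-bound $\CC(F)$ at an all-$z$ input by $\min\left(\CC_{\bar{0}}(f,z),\CC_{\bar{1}}(f,z)\right)$. The one step you state too casually --- that a $0$-certificate ``must rule out one of the two possible outputs in some copy,'' which also requires arguing that the certificate cannot instead cheaply block all $2^{T}$ cheat sheet cells --- is precisely the standard content of the cited cheat sheet lemmas, so the sketch is sound as an appeal to that machinery.
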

\bibliographystyle{plain}
\bibliography{cert-ref}

\begin{thebibliography}{1}

\bibitem{aaronson2016separations}
Scott Aaronson, Shalev Ben-David, and Robin Kothari.
\newblock Separations in query complexity using cheat sheets.
\newblock In {\em Proceedings of the forty-eighth annual ACM symposium on
  Theory of Computing}, pages 863--876, 2016.

\bibitem{ben2021unambiguous}
Shalev Ben-David, Mika G{\"o}{\"o}s, Siddhartha Jain, and Robin Kothari.
\newblock Unambiguous {DNF}s from hex.
\newblock {\em arXiv preprint arXiv:2102.08348}, 2021.

\bibitem{ben2016low}
Shalev Ben-David, Pooya Hatami, and Avishay Tal.
\newblock Low-sensitivity functions from unambiguous certificates.
\newblock {\em arXiv preprint arXiv:1605.07084}, 2016.

\bibitem{bousquet2014clique}
Nicolas Bousquet, Aur{\'e}lie Lagoutte, and St{\'e}phan Thomass{\'e}.
\newblock Clique versus independent set.
\newblock {\em European Journal of Combinatorics}, 40:73--92, 2014.

\bibitem{chung2006concentration}
Fan Chung and Linyuan Lu.
\newblock Concentration inequalities and martingale inequalities: a survey.
\newblock {\em Internet Mathematics}, 3(1):79--127, 2006.

\bibitem{goos2015lower}
Mika G{\"o}{\"o}s.
\newblock Lower bounds for clique vs. independent set.
\newblock In {\em 2015 IEEE 56th Annual Symposium on Foundations of Computer
  Science}, pages 1066--1076. IEEE, 2015.

\bibitem{nisan1995rank}
Noam Nisan and Avi Wigderson.
\newblock On rank vs. communication complexity.
\newblock {\em Combinatorica}, 15(4):557--565, 1995.

\bibitem{yannakakis1991expressing}
Mihalis Yannakakis.
\newblock Expressing combinatorial optimization problems by linear programs.
\newblock {\em Journal of Computer and System Sciences}, 43(3):441--466, 1991.

\end{thebibliography}

\end{document}